\def\ps@pprintTitle{%
 \let\@oddhead\@empty
 \let\@evenhead\@empty
 \def\@oddfoot{\centerline{\thepage}}%
 \let\@evenfoot\@oddfoot}
\newtheorem{theorem}{Theorem}
\newtheorem{lemma}[theorem]{Lemma}
\newcommand{\COMM}[2]{{
\begin{CJK}{UTF8}{ipxm}
\ifthenelse{\equal{#1}{SK}}{\color{blue}}{
\ifthenelse{\equal{#1}{TM}}{\color{red}}{
\ifthenelse{\equal{#1}{AA}}{\color{cyan}}{
\ifthenelse{\equal{#1}{BB}}{\color{magenta}}}}}
[#1: #2]
\end{CJK}
}}
\begin{document}

\begin{frontmatter}
\title{Incorrect implementations of the Floyd--Warshall algorithm give correct solutions after three repeats}

\author{Ikumi Hide}
\address{The University of Tokyo \\ ihide@es.a.u-tokyo.ac.jp}

\author{Soh Kumabe}
\address{The University of Tokyo \\ sohkuma0213@gmail.com}

\author{Takanori Maehara}
\address{RIKEN Center for Advanced Intelligence Project \\ takanori.maehara@riken.jp} 
%https://www.overleaf.com/project/5c62987bcf1a1158e925fe73

\begin{abstract}
The Floyd--Warshall algorithm is a well-known algorithm for the all-pairs shortest path problem that is simply implemented by triply nested loops.
In this study, we show that the incorrect implementations of the Floyd--Warshall algorithm that misorder the triply nested loops give correct solutions if these are repeated three times.
\end{abstract}

\begin{keyword}
graph algorithm; algorithm implementation; common mistake
\end{keyword}

\end{frontmatter}

\section{Introduction}

The \emph{Floyd--Warshall algorithm} is a well-known algorithm for the all-pairs shortest path problem~\cite{cormen2009introduction,floyd1962algorithm}.
Let $G = (V, E)$ be a complete directed graph, where $V = \{1, \dots, n\}$ is the set of vertices, and let $w \colon V \times V \to \mathbb{R} \cup \{\infty\}$ be the length of the edges.
We assume that $G$ has no negative cycles.
The Floyd--Warshall algorithm maintains array $\texttt{d}$ of size $n \times n$ initialized by $\texttt{d}[i, j] \leftarrow w(i, j)$ for all $i, j \in V$, and performs triply nested loops as shown in Algorithm~\ref{alg:kij}. 
Then, $\texttt{d}[i, j]$ eventually stores the shortest path distances for all $i, j \in V$.

\begin{algorithm}[b]
\caption{The Floyd--Warshall Algorithm}
\label{alg:kij}
\begin{algorithmic}[1]
\For{$k = 1, 2, \dots, n$}
\For{$i = 1, 2, \dots, n$}
\For{$j = 1, 2, \dots, n$}
\State{$\texttt{d}[i,j] \leftarrow \min \{ \texttt{d}[i,j], \texttt{d}[i,k] + \texttt{d}[k,j]\} $}
\EndFor
\EndFor
\EndFor
\end{algorithmic}
\end{algorithm}

A common mistake in implementing the Floyd--Warshall algorithm is to misorder the triply nested loops.\footnote{e.g., https://cs.stackexchange.com/questions/9636/why-doesnt-the-floyd-warshall-algorithm-work-if-i-put-k-in-the-innermost-loop}
The correct order is KIJ, and the incorrect versions, which are referred as \emph{IJK algorithm} and \emph{IKJ algorithm}, are shown in Algorithm~\ref{alg:ijk} and Algorithm~\ref{alg:ikj}, respectively.
These incorrect versions do not give correct solutions for some instance. 
However, we can prove that if these are repeated three times, we obtain the correct solutions.
More precisely, we obtain the following theorems.
\begin{theorem}
\label{thm:ijk}
If we repeat IJK algorithm three times, it solves the all-pairs shortest path problem.
Conversely, there exists an instance that needs three repeats to obtain a correct solution.
\end{theorem}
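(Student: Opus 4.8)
\emph{Setup and reduction.} The first thing I would record is the standard monotonicity invariant: every update replaces $\texttt{d}[i,j]$ by a sum $\texttt{d}[i,k]+\texttt{d}[k,j]$ of entries that are themselves upper bounds on true distances, so throughout the computation $\texttt{d}[i,j]\ge\delta(i,j)$, and each entry is non-increasing (never falls below $\delta$, never rises above its initial $w(i,j)$). Hence correctness after three passes is equivalent to the \emph{upper bound}: for every pair $i,j$ and every shortest path $P=(v_0,\dots,v_m)$ from $i=v_0$ to $j=v_m$ (which may be taken simple), we have $\texttt{d}[v_0,v_m]\le\mathrm{len}(P)$ once three passes are done. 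So I would fix one such $P$ and track how it is ``assembled'' by the updates.

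\emph{Combinatorial reformulation.} A single pass performs the relaxations $\texttt{d}[i,j]\leftarrow\min\{\texttt{d}[i,j],\texttt{d}[i,k]+\texttt{d}[k,j]\}$ in lexicographic order of $(i,j,k)$. Building the value $\mathrm{len}(P)$ into $\texttt{d}[v_0,v_m]$ amounts to choosing a parenthesization (a binary merge tree) of $P$: each internal node merges a subpath $v_a\cdots v_c$ at a split vertex $v_b$ via the update at cell $(v_a,v_c)$ with inner index $k=v_b$, reusing the already-built sub-distances $\texttt{d}[v_a,v_b]$ and $\texttt{d}[v_b,v_c]$. Reading off the row-major processing order, within one pass these inputs are current only if the left part is trivial or $v_b<v_c$ (cell $(v_a,v_b)$ precedes $(v_a,v_c)$ in the same row) and the right part is trivial or $v_b<v_a$ (cell $(v_b,v_c)$ lies in an earlier row); an input produced in an \emph{earlier} pass is always available. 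Thus each repeat lets one ``violated'' constraint be postponed, and I would prove that the least number of passes realizing $P$ equals $1+$ the minimum over merge trees of the maximum number of violated constraints along any root-to-leaf path. The theorem then becomes: this quantity is always $\le 2$, and equals $2$ for some $P$.

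\emph{Upper bound.} For the forward direction I would organize the merge tree around the vertex \emph{identities}. Splitting a range at its minimum-ID vertex is \emph{free} whenever that minimum is interior, since it is below both endpoints and so violates neither inequality; this is a ``min-Cartesian-tree'' construction in which violations can occur only at ranges whose minimum sits at an endpoint. Dually, splitting at the maximum-ID vertex costs exactly one violation on each side and peels the range into two pieces attaining the maximum at an endpoint. The plan is to combine these: a single split at the global maximum absorbs one violation level, and the free min-splits should absorb the other. The key lemma I must establish (by simultaneous induction on the number of vertices) is that \emph{any path attaining its minimum or its maximum at an endpoint is realized in two passes}. This is the main obstacle: naively peeling an endpoint costs a pass but leaves an arbitrary subpath, so the argument instead has to always split at a global extreme and show that the ``minimum-at-an-endpoint'' situation can be nested at most twice along any root-to-leaf path.

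\emph{Converse.} For the lower bound I would exhibit an explicit instance. On vertices $\{1,\dots,7\}$ take the path $1\to 6\to 4\to 3\to 7\to 5\to 2$ (order-isomorphic to $1\to8\to6\to4\to9\to7\to2$), give its six edges tiny weights and all other edges large weights, so that it is the unique shortest $1$--$2$ path and each of its subpaths is the unique shortest path between its endpoints. Checking every split position shows no merge tree has fewer than two violated constraints on some root-to-leaf path; equivalently, after two passes $\texttt{d}[1,2]$ is still strictly above $\mathrm{len}(P)$, while a third pass attains it. I would also note that seven vertices is forced: any path on at most six vertices admits a balanced split into two subpaths of at most three edges each, which incurs at most one violation and hence is realized in two passes.
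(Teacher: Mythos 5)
Your combinatorial reformulation is sound in the direction you actually need for the upper bound (a merge tree with at most $V$ violations on every root-to-leaf path certifies that $V+1$ passes suffice, and your availability constraints for IJK --- left input current iff $v_b<v_c$, right input current iff $v_b<v_a$ --- are correct). But the forward direction rests entirely on the key lemma you explicitly leave unproved, and as stated that lemma is \emph{false} in its minimum half. The paper's own Figure~\ref{fig:ijk} instance, the unit-weight path $1\to 2\to 4\to 3\to 6\to 7\to 5$, attains its minimum (vertex $1$) at the source, yet after two IJK passes $\texttt{d}[1,5]$ is still $\infty$ and only becomes $6$ in the third pass; equivalently, every merge tree of this path carries two violations on some root-to-leaf path. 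So ``minimum at an endpoint implies two passes'' cannot hold, and this is exactly the case your min-Cartesian recursion bottoms out on, so the plan ``one split at the global maximum plus free min-splits'' does not close. What is true, and what the paper proves, is only the \emph{maximum}-at-endpoint half, via a different free decomposition: split a \emph{valley} interval (both endpoints exceed all interior vertices) at its \emph{interior maximum} --- since the endpoints dominate even that maximum, both availability constraints hold, and valleys recurse into valleys, so every valley finishes within pass one (the paper's Lemma~\ref{lem:dent}, whose induction is phrased as ``after the loop at $i=u_a$'' precisely to control timing within a pass). Contracting valleys makes the surviving shortest path upper unimodal; pass two builds the increasing prefix by a comb in row $s$ and the decreasing suffix by a comb across rows in increasing vertex order (each comb incurs one violation per root-to-leaf path, from re-reading pass-one values), and pass three joins the two halves at the peak. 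To repair your argument, replace your lemma by ``maximum at an endpoint implies at most one violation per root-to-leaf path'' and prove it by this valley-contraction-plus-comb analysis; your root split at the global maximum then yields three passes exactly as you intended.

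Your converse is essentially workable but needs two corrections. The direction of your claimed equivalence that the converse invokes --- no merge tree with at most one violation per root-to-leaf path implies two passes fail --- is not automatic (it holds here because, with unique shortest subpaths, any attainment of the distance by $\texttt{d}[s,t]$ decomposes along the path and induces a merge tree), but you can sidestep it by direct simulation, which is how the paper handles the converse (exhaustive search); your instance $1\to 6\to 4\to 3\to 7\to 5\to 2$ does indeed leave $\texttt{d}[1,2]=\infty$ after two passes, so it is a valid witness. However, your minimality remark is wrong: a three-edge subpath is \emph{not} always realized in one pass (e.g.\ $1\to 3\to 4\to 2$ requires two), so a balanced split of a five-edge path can accumulate two violations on a single root-to-leaf path, and the ``six vertices imply two passes'' conclusion does not follow from your argument. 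Since the theorem only requires existence of a bad instance, not its minimality, you should drop that remark or replace it by exhaustive verification as the paper does.
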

\begin{theorem}
\label{thm:ikj}
If we repeat IKJ algorithm two times, it solves the all-pairs shortest path problem.
Conversely, there exists an instance that needs two repeats to obtain a correct solution.
\end{theorem}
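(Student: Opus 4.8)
The plan is to prove both directions of Theorem~\ref{thm:ikj} by tracking, pass by pass, exactly which paths a single sweep of the IKJ loop can ``assemble.'' Since every step only performs a relaxation $\texttt{d}[i,j] \leftarrow \min\{\texttt{d}[i,j], \texttt{d}[i,k]+\texttt{d}[k,j]\}$, a routine induction shows the invariant $\texttt{d}[i,j] \ge \delta(i,j)$ is maintained throughout (where $\delta$ denotes the true distance) and that every stored value is the length of a genuine walk. Hence soundness is automatic, and the whole problem reduces to the matching \emph{upper} bounds: after the prescribed number of repeats $\texttt{d}[i,j] \le \delta(i,j)$, i.e.\ every shortest path (which may be taken simple, as there are no negative cycles) has been captured.

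First I would isolate what one inner double loop does. Fixing the source $i$, the rows $\texttt{d}[k,\cdot]$ with $k\neq i$ are never modified during $i$'s inner loop, so that loop merely performs, for $k=1,\dots,n$ in increasing order, the updates $a(j)\leftarrow\min\{a(j),a(k)+R_k(j)\}$, where $a=\texttt{d}[i,\cdot]$ and $R_k$ is the frozen row $k$. Unwinding this shows the loop realizes every bound $a(j) \le a_{\mathrm{init}}(k_1)+R_{k_1}(k_2)+\cdots+R_{k_p}(j)$ along \emph{strictly increasing} pivots $k_1<\cdots<k_p$; moreover $R_k$ equals the already-finalized current-pass row when $k<i$ and the previous-pass row when $k>i$. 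This ``increasing-pivot chain'' description is the technical engine of the argument.

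Using it I would prove the key single-pass lemma: after one pass, $\texttt{d}[i,j] \le \delta_{<i}(i,j)$, the length of the shortest $i$--$j$ path all of whose intermediate vertices have label $<i$. The proof is induction on the source label $i$: given such a path, take its pivots to be the prefix maxima of its intermediate-vertex sequence; these are increasing, each segment between consecutive pivots $k_s,k_{s+1}$ has all its interior labels below $k_s<i$, and since $k_s<i$ its row is already finalized in the current pass, so by induction $\texttt{d}[k_s,k_{s+1}] \le$ (segment length). The chain mechanism stitches the segments together. The second pass is then handled by the \emph{same} prefix-maxima decomposition applied to an \emph{arbitrary} simple path: now pivots may exceed $i$, but each inter-pivot segment still has all interior labels below its own source, so by the single-pass lemma it is already captured in pass one, and the pass-two inner loop for $i$ chains these pass-one values together. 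This gives $\texttt{d}^{(2)}[i,j]\le\delta(i,j)$ for all pairs, so two repeats suffice.

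Finally, for tightness I would exhibit a small instance forcing two repeats: on four vertices set $w(1,3)=w(3,2)=w(2,4)=1$ and all other weights $\infty$, so the only finite path is $1\to3\to2\to4$. Its intermediate sequence $3,2$ is decreasing and every pivot exceeds the source $1$, so the first pass (which from source $1$ can only chain single edges along increasing pivots) leaves $\texttt{d}[1,4]=\infty$, whereas a direct check shows the second pass recovers the value $3$. The main obstacle is the single-pass lemma: getting the increasing-pivot-chain description of one inner loop right, keeping straight which rows are current-pass versus previous-pass, and spotting that the prefix-maxima choice of pivots is what makes both the induction and the two-pass composition go through; once these are in place, soundness and the lower bound are routine.
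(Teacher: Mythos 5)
Your proposal is correct and takes essentially the same approach as the paper: your single-pass lemma ($\texttt{d}[i,j] \le \delta_{<i}(i,j)$ after one pass, proved via increasing-pivot chains over prefix maxima) is exactly Lemma~\ref{lem:descent} restated with restricted distances, with your prefix-maxima pivots matching the paper's decomposition $u_{a_1} < u_{a_2} < \dots < u_{a_l} < u_a$, and your second-pass chaining from source $s$ matching the paper's proof of Theorem~\ref{thm:ikj}. Your tight instance $1 \to 3 \to 2 \to 4$ with unit lengths is the very instance in Figure~\ref{fig:ikj}, with your direct verification standing in for the paper's exhaustive search.
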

It would be emphasized that these fixes (repeating incorrect algorithms three times) have the same time complexity as the correct Floyd--Warshall algorithm up to constant factors.
Therefore, our results suggest that, if one is confused by the order of the triply nested loops, one can repeat the procedure three times just to be safe.

\begin{algorithm}[tb]
\caption{IJK algorithm}
\label{alg:ijk}
\begin{algorithmic}[1]
\For{$i = 1, 2, \dots, n$}
\For{$j = 1, 2, \dots, n$}
\For{$k = 1, 2, \dots, n$}
\State{$\texttt{d}[i,j] \leftarrow \min \{ \texttt{d}[i,j], \texttt{d}[i,k] + \texttt{d}[k,j]\} $}
\EndFor
\EndFor
\EndFor
\end{algorithmic}
\end{algorithm}

\begin{algorithm}[tb]
\caption{IKJ algorithm}
\label{alg:ikj}
\begin{algorithmic}[1]
\For{$i = 1, 2, \dots, n$}
\For{$k = 1, 2, \dots, n$}
\For{$j = 1, 2, \dots, n$}
\State{$\texttt{d}[i,j] \leftarrow \min \{ \texttt{d}[i,j], \texttt{d}[i,k] + \texttt{d}[k,j]\} $}
\EndFor
\EndFor
\EndFor
\end{algorithmic}
\end{algorithm}

\section{Proofs}

Both theorems are proved in a similar way. 
The converse parts (existence of bad instances) are proved by the exhaustive search, and such instances are shown in Figures~\ref{fig:ijk} and \ref{fig:ikj}.

To prove the if part, we regard the algorithms as graph modification processes.
Initially, we have the complete directed graph with edge length $\texttt{d}$ initialized by $w$.
Each update $\texttt{d}[i,j] \leftarrow \min \{ \texttt{d}[i,j], \texttt{d}[i,k] + \texttt{d}[k,j] \}$ modifies the edge length from $i$ to $j$ by the length of the path $[i, k, j]$ if it is shorter than the current length.
It is important that this modification does not change the shortest path distances.

In the following, we fix arbitrary vertices $s, t \in V$ and prove that, after several run of the algorithms, $\texttt{d}[s, t]$ stores the shortest path distance from $s$ to $t$. 
Since $s$ and $t$ are arbitrary, this proves the theorems.
We take arbitrary shortest path $[u_0, u_1, \dots, u_l]$ from $s$ to $t$ (i.e., $u_0 = s$ and $u_l = t$) and analyze the structure of the shortest path after single run of the algorithms.

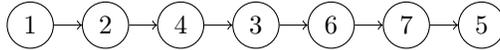
\begin{figure}[tb]
    \centering
    \begin{tikzpicture}
    \node[circle,draw] at (0,0) (a) {1};
    \node[circle,draw] at (1,0) (b) {2};
    \node[circle,draw] at (2,0) (c) {4};
    \node[circle,draw] at (3,0) (d) {3};
    \node[circle,draw] at (4,0) (e) {6};
    \node[circle,draw] at (5,0) (f) {7};
    \node[circle,draw] at (6,0) (g) {5};
    \draw[->] (a)--(b);
    \draw[->] (b)--(c);
    \draw[->] (c)--(d);
    \draw[->] (d)--(e);
    \draw[->] (e)--(f);
    \draw[->] (f)--(g);
    \end{tikzpicture}
    \caption{A smallest instance that needs three repeats of the IJK algorithm. All the edge lengths are one.}
    \label{fig:ijk}
\end{figure}

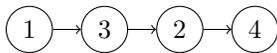
\begin{figure}[tb]
    \centering
    \begin{tikzpicture}
    \node[circle,draw] at (0,0) (a) {1};
    \node[circle,draw] at (1,0) (b) {3};
    \node[circle,draw] at (2,0) (c) {2};
    \node[circle,draw] at (3,0) (d) {4};
    \draw[->] (a)--(b);
    \draw[->] (b)--(c);
    \draw[->] (c)--(d);
    \end{tikzpicture}
    \caption{A smallest instance that needs two repeats of the IKJ algorithm. All the edge lengths are one.}
    \label{fig:ikj}
\end{figure}

We first prove Theorem~\ref{thm:ijk}.
We use the following lemma.
For integers $a$ and $c$ with $a < c$, we denote by $[a, c] = \{a, a+1, \dots, c\}$.
\begin{lemma}
\label{lem:dent}
Let $[a, c]$ be an interval such that $u_a > u_b$ and $u_c > u_b$ for all $b \in [a+1, c-1]$.
After the loop at $i = u_a$, $\texttt{d}[u_a, u_c]$ stores the shortest path distance from $u_a$ to $u_c$.
\end{lemma}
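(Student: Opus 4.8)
The plan is to strengthen the statement so that it pins down the \emph{moment} within the outer iteration $i = u_a$ at which $\texttt{d}[u_a, u_c]$ becomes correct. Specifically, I would prove that after the sub-iteration with $j = u_c$ (i.e.\ after the innermost $k$-loop run for that value of $j$) finishes, $\texttt{d}[u_a, u_c]$ already equals the shortest path distance $\mathrm{dist}(u_a, u_c)$. Since $\texttt{d}[u_a, u_c]$ is written only when $j = u_c$ and is never touched again during iteration $i = u_a$, this strengthened claim immediately yields the lemma. Throughout I would use the standard invariant that every stored value is the length of an actual walk, so (as $G$ has no negative cycle) $\texttt{d}[x,y] \ge \mathrm{dist}(x,y)$ at all times; hence it suffices to exhibit a single relaxation driving $\texttt{d}[u_a,u_c]$ down to $\mathrm{dist}(u_a,u_c)$. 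I would also fix the chosen shortest path to be simple, so that the labels $u_0, \dots, u_l$ are distinct.

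I would argue by induction on the length $c - a$ of the interval. The base case $c = a + 1$ is immediate: the subpath is the single edge $u_a \to u_c$, which is a shortest path, so $\texttt{d}[u_a, u_c] = w(u_a, u_c) = \mathrm{dist}(u_a, u_c)$ holds already and is preserved by the lower-bound invariant. For the inductive step, let $u_m$ be the intermediate vertex of largest label, $m \in (a, c)$. The valley hypothesis gives $u_m < u_a$ and $u_m < u_c$, and because $u_m$ has the maximal intermediate label, both subintervals $[a, m]$ and $[m, c]$ again satisfy the hypothesis of the lemma (for $[a,m]$ one needs $u_a > u_b$, which is given, and $u_m > u_b$; for $[m,c]$ one needs $u_c > u_b$, given, and $u_m > u_b$).

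The two halves are then handled on two different timescales. For $[m, c]$: since $u_m < u_a$, the outer iteration $i = u_m$ precedes $i = u_a$, so by the strengthened induction hypothesis $\texttt{d}[u_m, u_c] = \mathrm{dist}(u_m, u_c)$ holds after iteration $i = u_m$; as row $u_m$ is never rewritten during iteration $i = u_a$, this value is stable throughout. For $[a, m]$: this interval shares the left endpoint $u_a$, so it is resolved \emph{inside} iteration $i = u_a$; the induction hypothesis gives $\texttt{d}[u_a, u_m] = \mathrm{dist}(u_a, u_m)$ after the $j = u_m$ sub-iteration, and since $u_m < u_c$ this sub-iteration precedes the $j = u_c$ sub-iteration, while $\texttt{d}[u_a, u_m]$ is not rewritten for $j > u_m$. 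Hence, at the moment the $k$-loop for $j = u_c$ reaches $k = u_m$, both operands are correct, and the relaxation sets $\texttt{d}[u_a,u_c] \le \mathrm{dist}(u_a, u_m) + \mathrm{dist}(u_m, u_c) = \mathrm{dist}(u_a, u_c)$, the last equality because $u_m$ lies on the shortest path from $u_a$ to $u_c$. With the lower-bound invariant this forces equality, completing the step.

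The main obstacle, and the reason the naive statement is awkward to induct on, is the \emph{within-iteration timing}: the relaxation that fixes $\texttt{d}[u_a, u_c]$ needs $\texttt{d}[u_a, u_m]$ to be already correct, yet $u_m$ carries the same row index $u_a$, so its correctness is established only partway through the very same outer iteration. Tracking the exact sub-iteration via the strengthened claim, and separating the outer-loop argument for $[m,c]$ (using $u_m < u_a$) from the inner-$j$-loop argument for $[a,m]$ (using $u_m < u_c$), is precisely what resolves this; verifying that the relevant entries are not overwritten between the moment they become correct and the moment they are used is the part requiring the most care.
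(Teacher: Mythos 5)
Your proof is correct and takes essentially the same route as the paper: induction on the interval length, splitting at the highest intermediate vertex $u_m$ (the paper's $u_p$), using $u_m < u_a$ to get $\texttt{d}[u_m, u_c]$ correct in an earlier outer iteration and $u_m < u_c$ to get $\texttt{d}[u_a, u_m]$ correct in an earlier $j$-sub-iteration before the relaxation at $i = u_a$, $j = u_c$, $k = u_m$. Your strengthening of the induction hypothesis to ``correct after the $j = u_c$ sub-iteration'' merely makes explicit the within-loop timing that the paper's proof relies on implicitly (its literal hypothesis ``after the loop at $i = u_a$'' is used mid-loop), so it is the same argument carried out with slightly more rigor.
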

\begin{proof}
We prove this lemma by the induction on the length of the interval $[a, c]$.
If the length is at most three, the lemma immediately holds.
In general case, let $p \in [a, c]$ be such that $u_p$ is the highest in $\{ u_{a+1}, \dots, u_{c-1} \}$.
Then, $[a, p]$ and $[p, c]$ satisfy the condition of the lemma.
By the induction hypothesis, before the loop at $i = u_a$, 
$\texttt{d}[u_p, u_c]$ stores the shortest path distance from $u_p$ to $u_c$.
Also, by the induction hypothesis, after the loop at $i = u_a$, $\texttt{d}[u_a, u_p]$ stores the shortest path distance from $u_a$ to $u_p$.
Here, since $u_p < u_c$, the update of $\texttt{d}[u_a, u_c]$ is performed after the update of $\texttt{d}[u_a, u_p]$.
Therefore, by the update $\texttt{d}[u_a, u_c] \leftarrow \min \{\texttt{d}[u_a, u_c], \texttt{d}[u_a, u_p] + \texttt{d}[u_p, u_c] \}$, $\texttt{d}[u_a, u_c]$ stores the shortest path distance from $u_a$ to $u_c$.
\end{proof}

\begin{proof}[Proof of Theorem~1]
By Lemma~\ref{lem:dent}, after the first run of the IJK algorithm, the modified graph has a shortest path $[v_0, \dots, v_{l'}]$ whose vertices are upper unimodal, i.e., $s = v_0  < v_1 < \dots < v_h$ and $v_h > \dots > v_{l'-1} > v_{l'} = t$.
After the second run of the IJK algorithm,
$\texttt{d}[s, v_h]$ stores the shortest path distance from $s$ to $v_h$ because of the sequential update $\texttt{d}[s, j] \leftarrow \min \{ \texttt{d}[s, j], \texttt{d}[s, k] + \texttt{d}[k, j] \}$ for $(j, k) = (v_2, v_1), (v_3, v_2), \dots, (v_{h}, v_{h-1})$.
Also, $\texttt{d}[v_h, t]$ stores the shortest path distance from $v_h$ to $t$ because of the sequential update $\texttt{d}[i, t] \leftarrow \min \{ \texttt{d}[i, t], \texttt{d}[i, k] + \texttt{d}[k, t] \}$ for $(i, k) = (v_{l'-2}, v_{l'-1}), (v_{l'-3}, v_{l'-2}), \dots, (v_h, v_{h+1})$.
Therefore, after the third run of the IJK algorithm, $\texttt{d}[s, t]$ stores the shortest path distance from $s$ to $t$ because of the update $\texttt{d}[s, t] \leftarrow \min \{ \texttt{d}[s, t], \texttt{d}[s, v_p] + \texttt{d}[v_p, t] \}$
\end{proof}

We then prove Theorem~\ref{thm:ikj}.
We use the following lemma.
\begin{lemma}
\label{lem:descent}
Let $[a, c]$ be an interval such that $u_a > u_b$ for all $b \in [a+1, c-1]$.
After the loop for $i = u_a$, $\texttt{d}[u_a, u_c]$ stores the shortest path distance from $u_a$ to $u_c$.
\end{lemma}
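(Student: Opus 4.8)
The plan is to prove Lemma~\ref{lem:descent} by induction on the length $c - a$ of the interval, closely following the structure of the proof of Lemma~\ref{lem:dent}, but adapting it to the fact that in the IKJ algorithm the middle loop (for a fixed $i$) is the $k$-loop rather than the $j$-loop. Because the vertices are processed as intermediate vertices $k$ in increasing order, I would first \emph{strengthen} the statement so that the induction carries enough timing information: writing $u_p := \max\{u_{a+1}, \dots, u_{c-1}\}$ for the highest interior vertex (and treating $c = a+1$ as the case with no interior), I claim that $\texttt{d}[u_a, u_c]$ already equals the shortest path distance immediately after the sub-iteration $k = u_p$ within the loop $i = u_a$. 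This refinement is what makes the recursion close, and the base case $c = a+1$ (a single edge, already correct from initialization) is immediate.

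For the inductive step I would split the interval at the highest interior vertex $u_p$ into $[a, p]$ and $[p, c]$. Both sub-intervals inherit the hypothesis: $[a, p]$ satisfies it with left endpoint $u_a$ (its interior is contained in $[a+1, c-1]$), and $[p, c]$ satisfies it with left endpoint $u_p$ (since $u_p$ is the strict maximum over the interior, using that a shortest path may be taken simple so that the labels are distinct). The right half is handled by an \emph{earlier} outer iteration: as $u_p < u_a$, the loop $i = u_p$ runs before the loop $i = u_a$, so by the induction hypothesis applied to $[p, c]$ the entry $\texttt{d}[u_p, u_c]$ holds the shortest path distance from $u_p$ to $u_c$ after the loop $i = u_p$; since each run of the IKJ algorithm modifies only row $i$ during its loop $i$, row $u_p$ is frozen thereafter and $\texttt{d}[u_p, u_c]$ stays correct throughout the loop $i = u_a$.

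The left half is handled within the current loop $i = u_a$: by the strengthened induction hypothesis applied to $[a, p]$, the entry $\texttt{d}[u_a, u_p]$ becomes correct immediately after the sub-iteration $k = u_{p'}$, where $u_{p'}$ is the highest interior vertex of $[a, p]$. Since $u_{p'} \in \{u_{a+1}, \dots, u_{p-1}\}$ and $u_{p'} \ne u_p$, we have $u_{p'} < u_p$, so this happens strictly before the sub-iteration $k = u_p$; moreover, once an entry equals the true shortest path distance it remains equal to it, because relaxations never increase an entry and no entry can drop below its shortest path distance. Hence $\texttt{d}[u_a, u_p]$ is correct up to and during the sub-iteration $k = u_p$. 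Combining the two halves, the relaxation $\texttt{d}[u_a, u_c] \leftarrow \min\{\texttt{d}[u_a, u_c], \texttt{d}[u_a, u_p] + \texttt{d}[u_p, u_c]\}$ executed at $j = u_c$ during the sub-iteration $k = u_p$ sets $\texttt{d}[u_a, u_c]$ to $\mathrm{dist}(u_a, u_p) + \mathrm{dist}(u_p, u_c) = \mathrm{dist}(u_a, u_c)$, the last equality because $u_p$ lies on the shortest subpath $[u_a, \dots, u_c]$. This establishes the strengthened claim for $[a, c]$, which in particular proves the lemma.

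The main obstacle is precisely this timing issue of the left half: unlike the correct KIJ order, the IKJ algorithm finalizes $\texttt{d}[u_a, u_p]$ only gradually across its $k$-loop, so I must guarantee that it has finished being updated \emph{before} the pivotal sub-iteration $k = u_p$. This is exactly why I choose $u_p$ to be the maximal interior vertex and why the induction hypothesis must be strengthened to record the sub-iteration at which each entry becomes correct; maximality forces the finalizing index $u_{p'}$ of the left half to satisfy $u_{p'} < u_p$, which is what the increasing processing order of $k$ requires.
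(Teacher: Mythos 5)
Your proof is correct, but it takes a genuinely different route from the paper's. The paper does not split at the maximal interior vertex: instead it decomposes the whole tail $[a+1, c]$ into consecutive intervals $[a_1, a_2], [a_2, a_3], \dots, [a_l, a_{l+1}]$ (with $a_1 = a+1$, $a_{l+1} = c$), cutting at the left-to-right maxima of $u_{a+1}, \dots, u_{c-1}$, so that each segment satisfies the hypothesis of Lemma~\ref{lem:descent} and $u_{a_1} < u_{a_2} < \dots < u_{a_l} < u_a$. Every segment is then finalized by an \emph{earlier} outer iteration $i = u_{a_p}$ (the unstrengthened induction hypothesis suffices), and within the loop $i = u_a$ the entries $\texttt{d}[u_a, u_{a_2}], \dots, \texttt{d}[u_a, u_c]$ are built up by a single chain of relaxations whose pivots $k = u_{a_1}, \dots, u_{a_l}$ are encountered in increasing order; the timing obstacle you isolated disappears because one operand of each relaxation is final before the loop $i = u_a$ even begins, and the other was finalized at a strictly smaller $k$ within the same chain. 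Your binary split at the maximum interior vertex instead mirrors the paper's proof of Lemma~\ref{lem:dent} for the IJK order, and---as you correctly diagnosed---this forces you to strengthen the induction hypothesis to record \emph{when} within the $k$-loop $\texttt{d}[u_a, u_p]$ becomes correct, with maximality of $u_p$ guaranteeing $u_{p'} < u_p$ so the recursion closes; your handling of the remaining details (row $u_p$ is frozen after the loop $i = u_p$; entries never increase and never drop below true distances; subpath optimality at the combine step) is sound. What each approach buys: the paper's chain decomposition keeps the lemma statement unchanged and makes the IKJ argument arguably simpler than the IJK one, whereas your version keeps the two lemmas structurally parallel and yields slightly sharper information (finalization already at sub-iteration $k = u_p$, not merely at the end of the loop $i = u_a$); you are also somewhat more careful than the paper on two side points it leaves implicit, namely taking the shortest path simple so that the interior maximum is strict, and the invariant that entries always bound the true distances from above.
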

\begin{proof}
We prove this by the induction on the length of the interval $[a, c]$.
If the length is at most three, the lemma immediately holds.
In general case, we divide the interval $[a+1, c]$ into $[a_1, a_2]$, $[a_2, a_3]$, $\dots$, $[a_l, a_{l+1}]$ (i.e., $a_1 = a+1$ and $a_{l+1} = c$) such that each interval satisfies the condition of the lemma.
Here, we observe that $u_{a_1} < u_{a_2} < \dots < u_{a_l} < u_a$.
By the induction hypothesis, before the loop at $i = u_a$, all $\texttt{d}[u_{a_p}, u_{a_{p+1}}]$ ($p = 1, \dots, l$) store the shortest path distances.
Hence, after the loop at $i = u_a$, by the sequential update $\texttt{d}[u_a, u_{a_{p+1}}] \leftarrow \min \{ \texttt{d}[u_a, u_{p+1}], \texttt{d}[u_a, u_{a_p}] + \texttt{d}[u_{a_p}, u_{a_{p+1}}] \}$ for $p = 1, \dots, l$, $\texttt{d}[u_a, u_c]$ stores the shortest path distance.
\end{proof}

\begin{proof}[Proof of Theorem~2]
By Lemma~\ref{lem:descent}, after the first run of the IKJ algorithm, the modified graph has a shortest path $[v_0, \dots, v_{l'}]$ whose vertices are monotonically increasing except the last one, i.e., $s = v_0 < v_1 < \dots < v_{l'-1}$.
After the second run of the IKJ algorithm, $\texttt{d}[s, t]$ stores the shortest path distance from $s$ to $t$ because of the sequential update $\texttt{d}[s, j] \leftarrow \min \{ \texttt{d}[s, j], \texttt{d}[s, k] + \texttt{d}[k, j] \}$ for $(k, j) = (v_1, v_2), (v_2, v_3), \dots, (v_{l'-1}, v_{l'})$.
\end{proof}

\bibliographystyle{plain}
\bibliography{main}

\end{document}